\pgfplotsset{compat=1.10}
\newcommand{\calT}{\mathcal{T}}
\newcommand{\calY}{\mathcal{Y}}
\newcommand{\calJ}{\mathcal{J}}
\newcommand{\calE}{\mathcal{E}}
\newcommand{\QQ}{\mathsf{Q}}
\newcommand{\EE}{\mathbb{E}}
\newcommand{\calR}{\mathcal{R}}
\newcommand{\calB}{\mathcal{B}}
\newcommand{\calP}{\mathcal{P}}
\newcommand{\calC}{\mathcal{C}}
\newcommand{\calX}{\mathcal{X}}
\newcommand{\calV}{\mathcal{V}}
\newcommand{\calD}{\mathcal{D}}
\newcommand{\calA}{\mathcal{A}}
\newcommand{\sos}{\mathsf{SoS}}
\newcommand{\calI}{\mathcal{I}}
\newcommand{\opt}{\mathsf{opt}}
\renewcommand{\epsilon}{\varepsilon}
\newcommand{\FPT}{\mathsf{FPT}}
\renewcommand{\EE}{\mathsf{E}}
\newcommand{\PP}{\mathsf{P}}
\newcommand{\depth}{\mathsf{depth}}
\newcommand{\parent}{\mu}
\newcommand{\calS}{\mathcal{S}}
\newcommand{\capac}{\mathsf{cap}}
\newcommand{\dem}{\mathsf{dem}}
\newcommand{\LP}{\mathsf{LP}}
\newcommand{\NP}{\mathsf{NP}}
\renewcommand{\P}{\mathsf{P}}
\newtheorem{theorem}{Theorem}
\newtheorem{lemma}{Lemma}
\newtheorem{definition}{Definition}
\newtheorem{proposition}{Proposition}
\newtheorem{claim}{Claim}
\algnewcommand{\LeftComment}[1]{\Statex \(\triangleright\) #1}
\begin{document}
	\algrenewcommand\algorithmicrequire{\textbf{Input:}}
	\algrenewcommand\algorithmicensure{\textbf{Output:}}
	
	\title{A 2-Approximation for the Bounded Treewidth Sparsest Cut Problem in $\FPT$ Time
	\vspace{1cm}
	 }	
 
	\author{
	Vincent Cohen-Addad			
	\thanks{Google Research Zurich, Switzerland. {\tt vcohenad@gmail.com}}
	\and Tobias M\"{o}mke
	\thanks{University of Augsburg, Department of Computer Science, Germany. {\tt moemke@informatik.uni-augsburg.de}. Partially supported by 
     DFG Grant 439522729 (Heisenberg-Grant) and DFG Grant 439637648 (Sachbeihilfe)}
	\and Victor Verdugo			
	\thanks{Universidad de O'Higgins, Institute of Engineering Sciences, Chile. {\tt victor.verdugo@uoh.cl}}
	}
\date{\vspace{-1em}}
\maketitle
\begin{abstract}
In the non-uniform sparsest cut problem, we are given a supply graph $G$ and a demand graph $D$, both with the same set of nodes $V$. The goal is to find a cut of $V$ that minimizes the ratio of the total capacity on the edges of $G$ crossing the cut over the total demand of the crossing edges of $D$. In this work, we study the non-uniform sparsest cut problem for supply graphs with bounded treewidth $k$. For this case, Gupta, Talwar and Witmer [STOC 2013] obtained a 2-approximation with polynomial running time for fixed $k$, and the question of whether there exists a $c$-approximation algorithm for a constant $c$ independent of $k$, that runs in $\FPT$ time, remained open. We answer this question in the affirmative. We design a 2-approximation algorithm for the non-uniform sparsest cut with bounded treewidth supply graphs that runs in $\FPT$ time, when parameterized by the treewidth. Our algorithm is based on rounding the optimal solution of a linear programming relaxation inspired by the Sherali-Adams hierarchy. In contrast to the classic Sherali-Adams approach, we construct a relaxation driven by a tree decomposition of the supply graph by including a carefully chosen set of lifting variables and constraints to encode information of subsets of nodes with super-constant size, and at the same time we have a sufficiently small linear program that can be solved in $\FPT$ time.
\end{abstract}

\section{Introduction}

In the non-uniform sparsest cut problem, we are given two weighted graphs $G$ and $D$ on the same set of nodes $V$,
such that $G=(V,E_G)$ is the so-called supply graph,
and $D=(V,E_D)$ is the so-called demand graph.
For every edge $e\in E_G$ we have a positive integer weight $\capac(e)$ called capacity, and for every edge $e\in E_D$ we have a positive integer weight
$\dem(e)$ called the demand. 
An instance $\calI$ is given by a tuple $(G,D,\capac,\dem)$ and we denote by $|\calI|$ the encoding length of an instance $\calI$.
The goal is to compute a non-empty subset of nodes $S \subseteq V$ that minimizes
\[\phi(S) = \frac{\sum_{e\in \delta_G(S)}\capac(e)}{\sum_{e\in \delta_D(S)}\dem(e)},\]
where $\delta_G(S)=\{e\in E_G:|e\cap S|=1\}$ and $\delta_D(S)=\{e\in E_D:|e\cap S|=1\}$.
Since this problem is $\NP$-hard \cite{matula1990sparsest}, the focus has been on the design of approximation algorithms. 
In this line of work, Agrawal, Klein, Rao and Ravi \cite{klein1990approximation,klein1995approximate} took the first major step by
describing a $O(\log D\log C)$-approximation algorithm,
where $D$ is the total sum of the demands and $C$ is the total sum of the capacities.
Currently, the best approximation factor is $O(\sqrt{\log n}\log \log n)$ due to Arora, Lee and Naor \cite{arora2008euclidean}. 
The \emph{uniform} version of the problem, where the demand graph is unweighted and  complete, has received a lot of attention
through the years. The best bound for this problem is slightly better: $O(\sqrt{\log n})$ \cite{DBLP:journals/jacm/AroraRV09}.

The non-uniform sparsest cut problem is hard to approximate within a constant factor, for any constant,
under the unique games conjecture \cite{chawla2006hardness,GTW13,khot2015unique}.
Therefore, the problem has also been studied under the assumption that the supply graph belongs to a specific family of graphs.
Most notable examples include planar graphs, graph excluding a fixed minor, and bounded treewidth graphs.
In this paper, we focus on the latter (see Section~\ref{sec:relatedwork} for further related work on minor-closed families).

For inputs to the problem where the supply graph has treewidth at most $k$, Chlamtac, Krauthgamer and Raghavendra \cite{chlamtac2010approximating}
designed a $C(k)$-approximation algorithm that runs in time $2^{O(k)}|\calI|^{O(1)}$, where $C$ is a double exponential function of the treewidth $k$. 
Later, Gupta, Talwar and Witmer designed a $2$-approximation algorithm that runs in time $|\calI|^{O(k)}$ \cite{GTW13}. 
However, these two results are only
complementary: Chlamtac, Krauthgamer and Raghavendra's algorithm is Fixed-Parameter Tractable ($\FPT$) in the treewidth $k$ of the supply graph,
(i.e. $f(k)|\calI|^{O(1)}$ time for some computable function $f$), while the algorithm of Gupta, Talwar and Witmer is not. 
The approximation factor achieved by Gupta, Talwar and Witmer is independent of $k$, and furthermore, they show that there is no $(2-\varepsilon)$-approximation algorithm for any $\varepsilon>0$ on graphs with constant treewidth, assuming the Unique Games Conjecture and that there is no $1/\alpha_{\text{GW}} - \epsilon$ approximation algorithm for treeewidth 2 graphs unless  $\P = \NP$.
This left open the question of whether there exists a 2-approximation algorithm that runs in $\FPT$ time, when parameterized by the treewidth. 
We answer this question in the affirmative and show the following result.
\begin{theorem}
	\label{thm:main-sparsest}
	There is an algorithm that computes a $2$-approximation for every instance $\calI=(G,D,\capac,\dem)$ of the non-uniform sparsest cut problem in time 
	\[2^{2^{O(k)}}|\calI|^{O(1)},\]
	where $k$ is the treewidth of the supply graph $G$.
\end{theorem}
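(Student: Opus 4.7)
The plan is to follow the blueprint hinted at in the abstract: write a linear programming relaxation whose Sherali--Adams style lifting is driven by the bags of a tree decomposition of $G$, solve it in $\FPT$ time, and round it while losing only a factor of $2$. The key gain over vanilla Sherali--Adams is that the number of lifting variables drops from $n^{O(k)}$ (merely XP) to $2^{O(k)}n$ (which is $\FPT$), while the tree-decomposition structure should still carry enough global consistency to match the integrality gap of $2$ achieved by Gupta--Talwar--Witmer for the standard metric LP.

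First, I would compute a nice tree decomposition $(T,\{X_t\}_{t\in V(T)})$ of $G$ of width $O(k)$ in $2^{O(k)}|\calI|^{O(1)}$ time, with every bag of size $O(k)$ and $|V(T)|=O(n)$. Besides the usual metric variables $x_{uv}$, for each bag $X_t$ and each $A\subseteq X_t$ I would introduce a lifting variable $y_t(A)$ interpreted as the probability that the sought cut $S$ meets $X_t$ in exactly $A$. The constraints I would impose are: each $\{y_t(A)\}_A$ is a probability distribution; for every pair of adjacent bags $X_s,X_t$ the marginals of $y_s$ and of $y_t$ on any subset of $X_s\cap X_t$ agree; $x_{uv}=\sum_{A\subseteq X_t:\,|A\cap\{u,v\}|=1}y_t(A)$ whenever $u,v\in X_t$; and the normalization $\sum_{e\in E_D}\dem(e)x_e=1$. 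Minimizing $\sum_{e\in E_G}\capac(e)x_e$ subject to these constraints gives a valid relaxation of $\OPT$ with $2^{O(k)}n$ variables and constraints, solvable in time $\poly(|\calI|,2^k)$, and any integral cut yields a feasible solution of matching value.

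The crux is the rounding. Using the running intersection property of $T$, the marginal-consistency constraints ensure that the per-bag distributions $y_t$ can be glued into a joint distribution over global cuts, which I would sample by descending $T$ from an arbitrary root and drawing each bag's configuration conditioned on its parent's. To derandomize, I would enumerate over all ``shapes'' of the per-bag marginals: since a bag has $2^{O(k)}$ subsets and the set of (discretised) distributions over them has at most $2^{2^{O(k)}}$ distinct patterns, this accounts for the announced running time of $2^{2^{O(k)}}|\calI|^{O(1)}$. The factor $2$ would then follow edge by edge, since every $(u,v)\in E_G$ sits inside some bag $X_t$ and the probability the rounding separates $u,v$ is controlled by $y_t$ and therefore by $x_{uv}$, via a charging argument adapting the Gupta--Talwar--Witmer $\ell_1$-distortion analysis from small excluded substructures to the decomposition tree.

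The main obstacle is showing that consistency imposed only across adjacent bags, rather than across arbitrary $(k{+}1)$-tuples (as Sherali--Adams at level $\Theta(k)$ would require), still carries enough lifted information to preserve the factor-$2$ distortion bound. This is exactly the step that buys $2^{O(k)}n$ rather than $n^{O(k)}$ LP size, and it is where the tree structure must be used most delicately: one has to argue that the extra variables $y_t$, although describing only local cuts inside single bags, collectively encode all the global correlations needed by the distortion analysis.
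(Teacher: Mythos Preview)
Your proposal has a genuine gap at exactly the point you flag as ``the main obstacle.'' The LP you write down---lifting variables $y_t(A)$ indexed by single bags, with marginal consistency imposed only between adjacent bags---is essentially the local relaxation of Chlamtac--Krauthgamer--Raghavendra, which yields a $C(k)$-approximation (doubly exponential in $k$), not a factor $2$. You do not show how the factor-$2$ bound would survive; you only assert that ``one has to argue'' it does. You also never say what constraint ties $x_{st}$ to the lifting variables when $\{s,t\}\in E_D$ is a demand pair not contained in any bag, so as written the LP does not even control the demand side of the objective.

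Concretely, the failure point is such a demand edge $e=\{s,t\}$. Your top-down rounding makes $s\in\calB$ and $t\in\calB$ conditionally independent given the configuration at (roughly) the adhesion of their lowest common ancestor, but your LP has no variable relating $x_{st}$ to those conditional marginals. The paper closes precisely this gap by enlarging the index set of the lifting variables: instead of subsets of single bags, the variables $x(S,T)$ range over all $S\in\calS_\Theta$, where each $S$ may contain an entire bag $Y$ together with $\calV_\Theta^Y$, the union of all adhesions along the root-to-$Y$ path, and the analogous set for a second bag $Z$. This is what makes the key identity $\QQ_e(\omega_{s,T}=1)=\PP_{x,\Theta}(\beta_{s,T}=1)$ in Lemma~\ref{lem:second-lemma-sparsest} go through: both sides equal the LP quantity $x(\Lambda+s,T+s)/x(\Lambda,T)$ for a set $\Lambda$ that reaches all the way to the root, and that variable simply does not exist in your relaxation. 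The price is that these index sets have size $\Theta(2^\ell k+k\,\depth(\calE))$, which forces a nonstandard decomposition (Lemma~\ref{lem:super-node}) trading bag size against depth; with the optimal choice of $\ell$ the LP has $2^{O(k\log n/\log\log n)}$ variables, not $2^{O(k)}n$, and \emph{that} is where the $2^{2^{O(k)}}$ running time actually comes from. The derandomization is then a routine conditional-expectation argument, not an enumeration of discretized per-bag distributions.
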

Following the argumentation of Gupta et al.~\cite{GTW13}, for treewidth $k$ graphs our result implies a $2^{2^{O(k)}}|\calI|^{O(1)}$-time $2$-approximation to the minimum-distortion $\ell_1$ embedding.

The results obtained in the predecessor papers~\cite{chlamtac2010approximating,GTW13} where based on rounding certain linear programs obtained through the Sherali-Adams {\it lift \& project} hierarchy \cite{SA90}.
Our approximation algorithm is also based on rounding a linear program with a fractional objective given by the non-uniform sparsest cut value, but we construct this linear program in a different way, with the goal of obtaining a linear program of smaller size, but sufficiently strong in terms of gap.  
If we followed the classic Sherali-Adams approach, the relaxation of {\it level} $\ell$ would be constructed by using a variable encoding the value of any subset of the original variables up to size $\ell$, and it would take $n^{O(\ell)}$ time to solve this relaxation, where $n$ is the number of nodes in the graph.
In particular, solving a relaxation of level $\Theta(k)$ would take $n^{O(k)}$ time, which in principle rules out the possibility of achieving $\FPT$ running time by applying directly this approach.
In order to overcome this problem, we construct a linear programming relaxation driven by a tree decomposition of the supply graph $G$, where the variables are carefully chosen with the goal of encoding information of subsets of nodes with super-constant size, and at the same time the number of variables and constraints is sufficiently small so we can solve the relaxation in $\FPT$ time. 
We show that the relaxation is strong enough to get a 2-approximation by rounding the optimal fractional solution. 
The construction of our relaxation and the analysis of our algorithm can be found in Section \ref{sec:faster-algorithm}.

\subsection{Related Work}
\label{sec:relatedwork}
Despite the difficulties in approximating the non-uniform sparsest cut problem in general graphs, there are several other results for restricted families of graphs.
The case in which $G$ is planar has received a lot of attention.
Quite recently, Cohen-Addad, Gupta, Klein and Li \cite{cohen2021quasipolynomial} showed the existence of a quasi-polynomial time $(2+\varepsilon)$-approximation for the non-uniform sparsest cut problem in the planar case.
To get this result they combine a patching lemma approach with linear programming techniques.
We remark that for the planar case there is no polynomial time $1/(0.878+\varepsilon)\approx (1.139-\varepsilon)$-approximation algorithm under the unique games conjecture \cite{GTW13}.

Other families with constant factor approximation algorithms are outerplanar graphs \cite{okamura1981multicommodity}, series-parallel \cite{gupta2004cuts,chekuri2013flow,lee2010coarse}, $k$-outerplanar graphs \cite{chawla2006hardness}, graphs obtained by $2$-sums of $K_4$ \cite{chakrabarti2008embeddings} and graphs with constant pathwidth \cite{lee2013pathwidth}.
The impact of the treewidth parameter has also been studied in the context of polynomial optimization \cite{bienstock2018lp}. 
Finally, we mention that the Sherali-Adams hierarchy has been useful to design algorithms in other minor-free and bounded treewidth graph problems, including independent set and vertex cover~\cite{magen2009,bienstock2004tree}, and also
in several recent results on scheduling and clustering~\cite{VVW20,Garg18,maiti2020scheduling,davies2021scheduling,aprile2020tight}. 

Independent of our work, Chalermsook et al.~\cite{CKMSUV21} obtained a $O(k^2)$-approximation algorithm for sparsest cut in treewidth $k$ graphs, with running time $2^{O(k)} \cdot \text{poly}(n)$ and, for arbitrary $\epsilon > 0$, an $O(1/\epsilon^2)$-approximation algorithm with running time $2^{O(k^{1+\epsilon}/\epsilon)} \cdot \text{poly}(n)$.
Observe that these results are incomparable with our result: they obtain an asymptotically lower running time, whereas the obtained (constant) approximation ratio is considerably larger than 2.
Similar to our result, they build on the techniques from \cite{chlamtac2010approximating,GTW13}. However, their approach is based on a new measure for tree decompositions which they call the combinatorial diameter.

\section{Preliminaries: Tree Decompositions}

A tree decomposition of a graph $G=(V,E)$ is a pair $(\mathcal{X},\calT)$ where $\calT=(\calX,E_{\calT})$ is a tree and $\calX$ is a collection of subsets of nodes in $V$ called {\it bags}.
Each bag is a node in the tree $\calT$.
Furthermore, the pair $(\mathcal{X},\calT)$ satisfies the following conditions.
\begin{enumerate}[label=(\arabic*)]
	\item Every node in $V$ is in at least one bag, that is, $\displaystyle \cup_{X\in \calX}X=V$. \label{tree-1}
	\item For every edge $\{u,v\}\in E$ there exists a bag $X\in \calX$ such that $\{u,v\}\subseteq X$. \label{tree-2}
	\item For every node $u\in V$ the bags containing $u$ induce a subtree of $\calT$. \label{tree-3}
\end{enumerate}
The {\it width} of the tree decomposition $(\calX,\calT)$ corresponds to
the size of the largest bag in the tree decomposition, minus one.
The treewidth of $G$ is the minimum possible width of a tree decomposition for $G$.
We typically consider the tree $\calT$ to be rooted, and we denote its root by $\calR$.
We denote by $\depth(\calT)$ the depth of the tree $\calT$ and we say that a bag $X$ is at level $\ell$ if the distance from the root $\calR$ to $X$ in the tree $\calT$ is equal to $\ell$.
We denote by $\parent(X)$ the parent of $X$ in the tree $\calT$.
The intersection between a non-root bag $X$ and the parent bag, $\mu(X)\cap X$, is the called the {\it adhesion} of the bag $X$.
We say that a bag $Y$ is a {\it descendant} of $X$ if $X\ne Y$ and the bag $X$ belongs to the unique path in $\calT$ from $Y$ to the root, and in this case we say that $X$ is an {\it ancestor} of $Y$.

The following result due to Bodlaender \cite[Theorem 4.2]{bodlaender1988nc} states the existence of tree decompositions with a particular structure that is useful for our algorithm.

\begin{lemma}[\cite{bodlaender1988nc}]
	\label{lem:balanced-decomposition}
	Let $G$ be a graph with $n$ nodes and treewidth $k$.
	Then, there exists a tree decomposition $(\calX,\calT)$ of $G$ satisfying the following:
	\begin{enumerate}[label=(\alph*)]
		\item $\calT$ is a binary tree and $\depth(\calT)\in O(\log n)$.
		\item For every $X\in \calX$ we have that $|X|\le 3k+3$.
	\end{enumerate}
	The tree decomposition $(\calX,\calT)$ can be computed in time $2^{O(k^3)}n$.
\end{lemma}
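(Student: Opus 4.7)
The plan is to reduce the statement to the task of rebalancing a given tree decomposition. Bodlaender's classical linear-time algorithm produces a (possibly unbalanced) tree decomposition $(\calY,\calS)$ of $G$ with bag size at most $k+1$ in time $2^{O(k^3)}n$, and this step dominates the runtime. The remainder of the argument will rebalance $(\calY,\calS)$ in time $O(n)$ into a binary tree decomposition of depth $O(\log n)$ and bag size at most $3k+3$, so the stated overall bound will follow.

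The heart of the argument is a centroid-style separator lemma for tree decompositions: given any tree decomposition of a graph $H$ on $m$ vertices with bags of size at most $k+1$, one can find a connected subtree of at most three bags whose vertex union $X_0$ separates $H$ into components each covering at most $2m/3$ vertices. I would establish this by rooting the decomposition arbitrarily, assigning to each bag $Y$ the weight $w(Y)$ equal to the number of vertices of $H$ that appear exclusively in the subtree under $Y$, and then descending from the root by always moving to the heaviest child. The walk eventually reaches a bag $Y^\ast$ at which every child's subtree weight drops below $2m/3$; a short case analysis shows that taking $X_0$ to be $Y^\ast$, possibly together with $\mu(Y^\ast)$ and/or one heavy grandchild, gives the required balanced separator of size at most $3(k+1)$.

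Equipped with this separator, I build the balanced decomposition recursively. Declare $X_0$ to be the root bag of the new decomposition $(\calX,\calT)$. For each component $H_i$ of $H - X_0$, let $V_i = V(H_i) \cup (N_H(V(H_i)) \cap X_0)$; the subtree of $\calS$ corresponding to $H_i$, with the interface vertices of $X_0$ injected into its root bag, is a tree decomposition of $H[V_i]$ of width still at most $k+1$ (since only the relevant interface slice of $X_0$, of size at most $k+1$, is added). Recurse on each such sub-instance and attach the result below $X_0$ in $\calT$. Properties~\ref{tree-1} and~\ref{tree-2} are preserved by construction; property~\ref{tree-3} follows because every vertex of $X_0$ that reappears in a recursive bag is handled contiguously via the injection at the recursive root. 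Each recursive call operates on at most $2m/3$ vertices, so the resulting tree has depth $O(\log n)$, and a standard padding step (replacing any node with more than two children by a chain of duplicate bags) converts $\calT$ into a binary tree without affecting the size bound or the asymptotic depth.

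The main obstacle I anticipate is bag-size bookkeeping through the recursion: the factor of $3$ is essentially tight and must be spent carefully. It is afforded by the centroid lemma, which guarantees that at most three bags of the original width-$k$ decomposition suffice to realize a $2/3$-balanced separator, and by the fact that we recurse on restrictions of $H$ rather than on decompositions whose bags have been globally enlarged by $X_0$. Checking that these two choices interact correctly, so that the bound $3k+3$ propagates inductively and each recursive decomposition remains a valid tree decomposition of the restricted graph, is the delicate combinatorial point. Once these invariants are set up the remainder is a straightforward induction.
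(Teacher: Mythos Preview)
The paper does not prove this lemma at all: it is quoted verbatim as Theorem~4.2 of Bodlaender~\cite{bodlaender1988nc} and used as a black box in the proof of Lemma~\ref{lem:super-node}. There is therefore no proof in the paper to compare your proposal against.

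That said, your sketch is in the spirit of Bodlaender's original argument, but the bookkeeping is not quite right. The centroid lemma for tree decompositions already gives a \emph{single} bag that is a $1/2$-balanced separator with respect to any vertex weighting; you do not need three bags for the separator itself. The factor $3$ in the bound $3(k+1)$ arises instead from the recursion: one carries along, in addition to the width-$k$ decomposition of the current subgraph, an ``attachment'' set $W$ of size at most $2(k+1)$ coming from separators chosen at earlier levels, and the output bag at each step is the union of the current separator with $W$. Your claim that the interface injected into the recursive root has size at most $k+1$, so that the recursive decomposition again has width $k$, is exactly the invariant that must be maintained, but it does not follow from the description you gave (injecting all of $N_H(V(H_i))\cap X_0$ into one bag could add up to $3(k+1)$ vertices if $X_0$ is itself a union of three bags). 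Since the lemma is cited rather than proved in the paper, resolving this is a matter of reconstructing Bodlaender's argument, not of matching the paper.
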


\section{The LP Relaxation and the Rounding Algorithm}
\label{sec:faster-algorithm}

Our algorithm is based on rounding the optimal solution of a linear programming relaxation for the non-uniform sparsest cut problem.
In Section \ref{sec:sparsest-relaxation} we provide the construction of our linear programming relaxation and in Section \ref{sec:sparsest-analysis} we provide the rounding algorithm and the proof of Theorem \ref{thm:main-sparsest}.
In the following lemma we show the existence of a tree structure that we use to construct the linear program (see also Fig.~\ref{fig:bags}).
\begin{lemma}
\label{lem:super-node}
Let $G$ be a graph with treewidth $k$ and let $\ell$ be a positive integer.
Then, there exists a tree decomposition $(\calY,\calE)$ of $G$ such that the following holds:
\begin{enumerate}[label=(\alph*)]
	\item The width of $(\calY,\calE)$ is  $O(2^{\ell}k)$ and $\depth(\calE)\in O(\log(n)/\ell)$.\label{supernode1}
	\item For every non-root bag $Y\in \calY$, the size of the adhesion of $Y$ is $O(k)$. \label{supernode2}
\end{enumerate}	
The decomposition $(\calY,\calE)$ can be found in $2^{O(k^3)}n$ time.
\end{lemma}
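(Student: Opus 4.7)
The plan is to start from the balanced binary tree decomposition guaranteed by Lemma~\ref{lem:balanced-decomposition} and coarsen it by collapsing chunks of $\ell$ consecutive levels into single super-bags. Concretely, let $(\calX,\calT)$ be such a decomposition of $G$: a rooted binary tree of depth $O(\log n)$ with every bag of size at most $3k+3$, computable in $2^{O(k^3)}n$ time. Call a bag a chunk root if its level in $\calT$ is a multiple of $\ell$ (including the overall root $\calR$). For each chunk root $X$ at level $i\ell$, let $Y_X \subseteq \calX$ consist of $X$ together with all its $\calT$-descendants at levels strictly less than $(i+1)\ell$; these chunks partition $\calX$ into connected subtrees of $\calT$. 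Take $\calY$ to be the collection of super-bags $\bar X := \bigcup_{X' \in Y_X} X'$, and let $\calE$ be the quotient tree, with an edge between two super-bags whenever $\calT$ has an edge between a bag in one chunk and a bag in the other.

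Next I would verify that $(\calY,\calE)$ is a tree decomposition of $G$. Conditions~\ref{tree-1} and~\ref{tree-2} are immediate since every original bag is contained in exactly one super-bag. For condition~\ref{tree-3}, fix $u \in V$ and let $T_u \subseteq \calT$ be the subtree of original bags containing $u$; the super-bags containing $u$ are exactly those whose chunk meets $T_u$, and since each chunk and $T_u$ are both $\calT$-connected, the image of $T_u$ under the quotient is $\calE$-connected. Finally, $\calE$ is indeed a tree because contracting connected subtrees of a tree produces a tree.

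It remains to bound width, depth, and adhesion. Each chunk is a binary subtree of $\calT$ of depth at most $\ell-1$, so it contains at most $2^{\ell}-1$ original bags, each of size at most $3k+3$; hence every super-bag has size at most $(2^{\ell}-1)(3k+3)=O(2^{\ell} k)$, giving width $O(2^{\ell} k)$, and $\depth(\calE) \le \lceil \depth(\calT)/\ell \rceil = O(\log(n)/\ell)$, which establishes~\ref{supernode1}. For~\ref{supernode2}, let $\bar X$ be a non-root super-bag with chunk root $X$, and let $\bar X'$ be its parent super-bag; then $\parent(X)$ lies in the chunk of $\bar X'$. The key claim is $\bar X \cap \bar X' \subseteq X \cap \parent(X)$: if $v \in \bar X \cap \bar X'$, then $v$ belongs to some original bag in each chunk, and by the subtree property~\ref{tree-3} of $(\calX,\calT)$ the vertex $v$ appears in every bag along the $\calT$-path between them, in particular in both $X$ and $\parent(X)$. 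Since $|X| \le 3k+3$, the adhesion has size $O(k)$. This last step is the main subtlety: although each super-bag can be exponentially large in $\ell$, the overlap between consecutive super-bags is squeezed through a single edge of $\calT$ and is therefore controlled by one original bag. The overall running time is dominated by the $2^{O(k^3)}n$ cost of invoking Lemma~\ref{lem:balanced-decomposition}; the chunking, the bag unions, and the construction of $\calE$ can all be carried out in an additional linear pass.
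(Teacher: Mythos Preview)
Your proof is correct and follows essentially the same approach as the paper: start from Bodlaender's balanced decomposition and merge $\Theta(\ell)$ consecutive levels into super-bags, then observe that the interface between a super-bag and its parent is controlled by a single original bag. The only cosmetic difference is that the paper spaces chunk roots $\ell-1$ levels apart so that consecutive chunks \emph{share} one level of bags (making the adhesion lie inside the shared bag $X'$), whereas you take disjoint chunks and squeeze the adhesion through the edge $\{X,\parent(X)\}$; both variants yield the same bounds.
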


\begin{proof}	
Consider a tree decomposition $(\calX,\calT)$ satisfying the conditions guaranteed by Lemma \ref{lem:balanced-decomposition}.
That is, the tree $\calT$ is binary, $\depth(\calT)\in O(\log(n))$ and $|X|\le 3k+3$ for every bag $X\in \calX$.
Let $\calX_{\ell}$ be the set of all bags in $\calX$ at level $i(\ell-1)$ with $i\in \{0,1,\ldots,\lfloor\depth(\calT)/(\ell-1)\rfloor\}$.
For every bag $X\in \calX_{\ell}$ with level $i(\ell-1)$ consider the set $Y_X$ given by the union of $X$ with all its descendant bags at level $i(\ell-1)+j$ with $j\in \{0,1,\ldots,\ell-1\}$.
Since $\calT$ is binary we have that $|Y_X|\le 2^{\ell}(3k+3)= O(2^{\ell}k)$ and we define $\calY=\{Y_X:X\in \calX_{\ell}\}$.
We define the edges of the tree $\calE$ with nodes $\calY$ as follows:
For every bag $X$ at level $i(\ell-1)$ with $i\in \{0,1,\ldots,\lfloor\depth(\calT)/(\ell-1)\rfloor-1\}$, we have the edge $\{Y_X,Y_{X'}\}$ in the tree if $X’$ is a descendant of $X$ and the level of $X’$ is $(i+1)(\ell-1)$ (See Figure \ref{fig:bags}).
In particular, the depth of $\calE$ is at most $O(\depth(\calT)/\ell)=O(\log(n)/\ell)$.
This proves condition \ref{supernode1}.

Now let $\{Y_X,Y_{X'}\}$ be any edge of the tree $\calE$ and consider a node $u\in Y_X\cap Y_{X'}$.
Since $(\calX,\calT)$ is a tree decomposition, we have that the set of bags in $\calX$ that contain the node $u$ induces a subtree of $\calT$.
Furthermore, $X'$ is the unique bag of $\calX$ in the intersection of the set of bags defining $Y_X$ and $Y_{X'}$, and therefore the connectivity of this subtree implies that $u\in X’$.
Since $|X'|\le 3k+3$, we conclude that the size of the adhesion of any bag $Y\in \calY$ is $O(k)$.
This shows that condition \ref{supernode2} holds.
The fact that $(\calY,\calE)$ is a tree decomposition holds since the construction preserves conditions \ref{tree-1}-\ref{tree-3} from $(\calX,\calT)$.
\end{proof} 
\begin{figure}
    \begin{center}
        \includegraphics[scale=1.5]{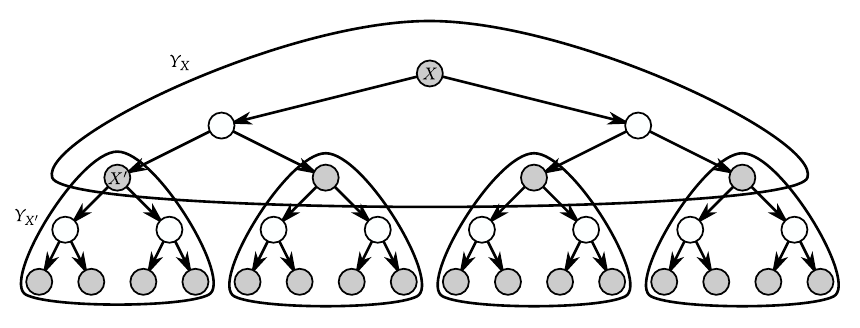}
        \caption{\label{fig:bags}Tree decomposition from Lemma~\ref{lem:super-node} for $\ell = 3$. The gray nodes form the set $\calX_\ell$.}
    \end{center}
\end{figure}
\begin{definition}
Given a graph $G$, we say that a tree decomposition $\Theta=(\calY,\calE)$ satisfying properties \ref{supernode1}-\ref{supernode2} is a {\it $(k,\ell)$-decomposition} of $G$.
\end{definition}
Given a bag $Y\in \calY$, we denote by $\calP_{\Theta}^Y$ the subset of bags that belong to the path from $Y$ to the root $\calR$ in the tree $\calE$.
We denote by $\calJ_Y$ the adhesion of $Y$. 
Furthermore, let 
\[\calV_{\Theta}^{Y}=\bigcup_{Z\in \calP_{\Theta}^Y}\calJ_Z,\] 
and for every pair of non-root bags $Y,Z\in \calY$ let $\calS_{\Theta}(Y,Z)$ be the power set of $(Y\cup \calV_{\Theta}^Y)\cup (Z\cup\calV_{\Theta}^Z).$
Finally, let 
\[\calS_{\Theta}=\bigcup_{Y,Z\in \calY}\calS_{\Theta}(Y,Z).\]
Observe that for every bag $Y\in \calY$, the size of $Y\cup \calV_{\Theta}^Y$ is $ O(2^{\ell}k+k\log(n)/\ell)$.

\subsection{The LP Relaxation}
\label{sec:sparsest-relaxation}
Consider a positive integer $\ell$ and an instance $(G,D,\capac,\dem)$ where $G$ has treewidth $k$.
Let $\Theta=(\calY,\calE)$ be a $(k,\ell)$-decomposition of the supply graph $G$.
In what follows we describe our LP relaxation, inspired by the Sherali-Adams hierarchy \cite{SA90} and the predecessor works~\cite{chlamtac2010approximating,GTW13}.
In this linear program there are two types of variables. 
The variable $x(S,T)$, with $S\in \calS_{\Theta}$ and $T\subseteq S$, indicates that the cut solution $C$ satisfies that $C\cap S=T$. 
The variable $y(\{u,v\})$ for $u,v\in V$ with $u\ne v$, indicates whether the nodes $u$ and $v$ fall in different sides of the cut.
For notation simplicity, we sometimes denote the union between a set $A$ and a singleton $\{a\}$ by $A+a$.
Consider the following linear fractional program:

\begin{align}
\text{minimize} \quad\quad \frac{\sum_{e\in E_G}  \capac(e)y(e)}{\sum_{e\in E_D}\dem(e)y(e)} \quad \quad \quad& \label{eq:objective}\\
\text{subject to}  \quad x(\{u,v\},u)+x(\{u,v\},v))& = y(\{u,v\}) \quad \text{ for every }u,v\in V \text{ with } u\ne v,\label{eq:cut-indicating}\\
\sum_{A\subseteq S} x(S,A)& = 1 \quad \quad\;\;\quad \quad \text{ for every }S\in \calS_{\Theta}, \label{eq:prob-measure}\\
x(S,A)& \ge 0 \quad \quad\;\;\quad \quad \text{ for every }S\in \calS_{\Theta} \text{ and }A\subseteq S, \label{eq:prob-positive}\\
x(S+u,A)+x(S+u,A+u)& = x(S,A) \quad \begin{array}{@{}c@{}}  \;\;\;\text{ for every }S\subseteq V, u\notin S\text{ such that }\\S+u\in \calS_{\Theta} \text{ and }A\subseteq S.\end{array} \label{eq:prob-consistent}
\end{align}

The feasible region of this linear program is a polytope encoding the cuts in $V$.
Indeed, given any cut $C$, define $U_j=1$ if $j\in C$ and zero otherwise.
For every $S\in \calS_{\Theta}$ and $A\subseteq S$, define $x(S,A)=\prod_{j\in A}U_j\prod_{j\in S\setminus A}(1-U_j)$ and $y(\{u,v\})=U_u(1-U_v)+U_v(1-U_u)$.
The solution $(x,y)$ satisfies conditions \eqref{eq:cut-indicating}-\eqref{eq:prob-consistent}.
We remark that \eqref{eq:prob-consistent} is valid for every cut since given a subset $S$ and a node $u\notin S$, the intersection between $S+u$ and a cut $C$ is either $C\cap S$ or $(C\cap S)+u$, which are the two possibilities in the left hand side of \eqref{eq:prob-consistent}.
Since for every bag $Y\in \calY$ the size of $Y\cup \calV_{\Theta}^Y$ is $O(2^{\ell}k+k\log(n)/\ell)$, we get
\begin{align*}
	|\calS_{\Theta}(Y,Z)|&= 2^{O(k(2^{\ell}+\log(n)/\ell))}\text{ for any pair of bags }Y,Z\in \calY,\\
	|\calS_{\Theta}|&\le \sum_{Y,Z\in \calY}|\calS_{\Theta}(Y,Z)|=  n^2 2^{O(k(2^{\ell}+\log(n)/\ell))},
\end{align*}
and therefore the number of variables and constraints in the linear fractional program is 
\[O\Big(|\calS_{\Theta}|\cdot 2^{\max\{|S|:S\in \calS_{\Theta}\}}\Big)= n^2 2^{O(k(2^{\ell}+\log(n)/\ell))}.\]
By using a standard equivalent reformulation the linear fractional program \eqref{eq:objective}-\eqref{eq:prob-consistent} can be solved by a linear program with one additional variable and constraint \cite{boyd2004convex}. 
\subsection{The Rounding Algorithm}
\label{sec:sparsest-analysis}
In this section we describe our algorithm for the non-uniform sparsest cut problem.
Before stating the algorithm, we introduce an object that will be used in the analysis.
Recall that $G$ is of treewidth $k$ and $\Theta=(\calY,\calE)$ is a $(k,\ell)$-decomposition of $G$. 
\begin{definition}
	\label{def:prob-functions}
Given a feasible solution $(x,y)$ satisfying \eqref{eq:cut-indicating}-\eqref{eq:prob-consistent}, we define the function given by $f_{x,\Theta}^{\calR}(A)=x(\calR,A)$ for every $A\subseteq \calR$, where $\calR$ is the root bag of $\calE$.
Furthermore, given any non-root bag $Y\in \calY$ and a subset $T\subseteq \calV_{\Theta}^Y$ such that $x(\calV_{\Theta}^Y,T)>0$, we define the function given by 
\begin{equation}f_{x,\Theta}^{T,Y}(A)=\frac{x(\calV_{\Theta}^Y\cup Y,T\cup A)}{x(\calV_{\Theta}^Y,T)} \label{eq:conditional-prob}
\end{equation}
for every $A\subseteq Y\setminus \mu(Y)$, where $\mu(Y)$ is the parent of $Y$ (see Figure~\ref{fig:sets}). 
\end{definition}
\begin{figure}[h]
	\begin{center}
		\includegraphics[width=3.7cm]{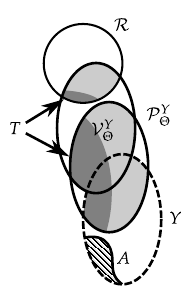}
		\caption{\label{fig:sets} Sets used in Definition~\ref{def:prob-functions}. The set $Y$ is the ellipse with dashed boundary. $\mathcal{P}_{\Theta}^Y$ is the set of bags in the path from $Y$ to the root $\calR$. The set $\mathcal{V}_{\Theta}^Y$ contains all areas depicted in  gray (both light and dark gray). The dark gray part of $\mathcal{V}_{\Theta}^Y$ is $T$. The hatched subset of $Y$ is $A$.}
	\end{center}
\end{figure}
The functions introduced in Definition~\ref{def:prob-functions} have a probabilistic interpretation that will be at the basis of our rounding algorithm.
The structure provided by constraints \eqref{eq:prob-measure}-\eqref{eq:prob-consistent} induces probability distributions over subsets of a bag in the decomposition $\Theta$.
For a bag $Y$, the value \eqref{eq:conditional-prob} can be interpreted as a conditional probability given the choice of $T\subseteq \calV_{\Theta}^Y$.
The following proposition summarizes these properties.
\begin{proposition}
\label{prop:useful-x}	
Consider an instance $(G,D,\capac,\dem)$ with $G$ of treewidth $k$ and let $\ell$ be a positive integer. Let $\Theta=(\calY,\calE)$ be a $(k,\ell)$-decomposition of the graph $G$ and let $(x,y)$ be a solution satisfying \eqref{eq:cut-indicating}-\eqref{eq:prob-consistent}.
Then, the following holds:	
\begin{enumerate}[label=(\alph*)]
	\item Let $L,I\in \calS_{\Theta}$ such that $L\subseteq I$.
	Then, for every $C\subseteq L$, we have $x(L,C)=\sum_{I'\subseteq I\setminus L}x(I,C\cup I')$.\label{prob-extension}
	\item $\sum_{A\subseteq \calR}f_{x,\Theta}^{\calR}(A)=1$.\label{prob-root}
	\item For every non-root bag $Y\in \calY$ and every $T\subseteq \calV_{\Theta}^Y$, we have $\sum_{A\subseteq Y\setminus \mu(Y)}f_{x,\Theta}^{T,Y}(A)=1$.\label{prob-nonroot}
\end{enumerate}
\end{proposition}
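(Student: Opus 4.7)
The plan is to prove the three parts in the order (a), (b), (c): part (a) does the real work, (b) is an immediate invocation of \eqref{eq:prob-measure}, and (c) follows by combining (a) with the tree-decomposition geometry. The main underlying observation, used throughout, is that $\calS_\Theta$ is downward closed under inclusion, since it is defined as a union of power sets; in particular, if $I \in \calS_\Theta$ then every $S \subseteq I$ also belongs to $\calS_\Theta$, so the LP variables $x(S, \cdot)$ and all relevant instances of constraints \eqref{eq:prob-measure}--\eqref{eq:prob-consistent} are well-defined.

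For part (a) I would induct on $m := |I \setminus L|$. The base $m = 0$ is trivial since then $I = L$. For the step, pick any $u \in I \setminus L$ and set $I' = I \setminus \{u\}$; both $I'$ and $L$ lie in $\calS_\Theta$ by the downward-closedness above, so constraint \eqref{eq:prob-consistent} applied with $S = I'$ (and any $A \subseteq I'$) yields $x(I, A) + x(I, A \cup \{u\}) = x(I', A)$. The inductive hypothesis applied to the pair $(L, I')$ then gives, for any $C \subseteq L$,
\[
x(L, C) \;=\; \sum_{J \subseteq I' \setminus L} x(I', C \cup J) \;=\; \sum_{J \subseteq I' \setminus L} \bigl[ x(I, C \cup J) + x(I, C \cup J \cup \{u\}) \bigr].
\]
Since each $K \subseteq I \setminus L$ is uniquely of the form $J$ or $J \cup \{u\}$ for some $J \subseteq I' \setminus L$, the right-hand side collapses to $\sum_{K \subseteq I \setminus L} x(I, C \cup K)$, closing the induction.

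Part (b) is immediate: the root bag $\calR$ is in $\calS_\Theta$ (taking $Y = Z = \calR$ in the power-set definition, or viewing it as a subset of any $Y \cup \calV_\Theta^Y$ on its own), so constraint \eqref{eq:prob-measure} at $S = \calR$ together with $f_{x,\Theta}^\calR(A) = x(\calR, A)$ gives $\sum_{A \subseteq \calR} f_{x,\Theta}^\calR(A) = 1$. For part (c) I would apply part (a) with $L = \calV_\Theta^Y$ and $I = \calV_\Theta^Y \cup Y$, both of which lie in $\calS_\Theta$ via $\calS_\Theta(Y, Y)$. The combinatorial identity to verify is $I \setminus L = Y \setminus \mu(Y)$: the inclusion $\calJ_Y \subseteq \calV_\Theta^Y$ gives $Y \cap \calV_\Theta^Y \supseteq \mu(Y) \cap Y$, and conversely, if some $u \in Y$ also lies in $\calJ_Z$ for a bag $Z$ on the path from $Y$ to $\calR$, then $u$ is in both $Y$ and an ancestor of $Y$, so by the subtree axiom \ref{tree-3} it lies in every bag on that path and in particular in $\mu(Y)$. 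With this identity, part (a) applied to $C = T$ yields
\[
x(\calV_\Theta^Y, T) \;=\; \sum_{A \subseteq Y \setminus \mu(Y)} x(\calV_\Theta^Y \cup Y, T \cup A),
\]
and dividing by $x(\calV_\Theta^Y, T) > 0$ (which is assumed in the definition of $f_{x,\Theta}^{T,Y}$) gives \ref{prob-nonroot}.

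The hard part is not any single step but the bookkeeping in (a), namely making sure that every intermediate $x(\cdot, \cdot)$ symbol is an actual LP variable and every invocation of \eqref{eq:prob-consistent} is a valid constraint; this is precisely why the downward closedness of $\calS_\Theta$ needs to be highlighted up front. The geometric observation needed for (c), that $Y \cap \calV_\Theta^Y = \calJ_Y$, is the only genuinely ``tree-decomposition'' ingredient and follows cleanly from axiom \ref{tree-3}.
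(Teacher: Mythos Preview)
Your proof is correct and follows essentially the same approach as the paper: induction on $|I\setminus L|$ for part~\ref{prob-extension}, a direct appeal to constraint~\eqref{eq:prob-measure} for part~\ref{prob-root}, and applying part~\ref{prob-extension} with $L=\calV_\Theta^Y$, $I=\calV_\Theta^Y\cup Y$ together with $Y\setminus\calV_\Theta^Y=Y\setminus\mu(Y)$ for part~\ref{prob-nonroot}. The only cosmetic difference is that in~\ref{prob-extension} you peel elements off $I$ (shrinking to $I'=I\setminus\{u\}$) whereas the paper adds elements to $L$ (growing to $L+v$); your explicit remarks on the downward closedness of $\calS_\Theta$ and the tree-decomposition justification of $Y\cap\calV_\Theta^Y=\calJ_Y$ are details the paper leaves implicit.
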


\begin{proof}
We prove part \ref{prob-extension} by induction on the size of $I\setminus L$.
When $I\setminus L=\{u\}$ for some node $u\in V$, since $x$ satisfies condition \eqref{eq:prob-consistent} we have 
\begin{align*}
x(L,C)&=x(L+u,C)+x(L+u,C+u)=x(I,C)+x(I,C+u)=\sum_{I'\subseteq I\setminus L}x(I,C\cup I'),
\end{align*}
where the second equality holds since $I=L+u$.
Now consider any pair $L,I\in \calS_{\Theta}$ with $L\subseteq I$ such that the size of $I\setminus L$ is larger than one, consider any node $v\in I\setminus L$.
Then, by condition \eqref{eq:prob-consistent} and the inductive step we get
\begin{align*}
x(L,C)&=x(L+v,C)+x(L+v,C+v)\\
      &=\sum_{H\subseteq I\setminus (L+v)}x(I,C\cup H)+\sum_{H\subseteq I\setminus (L+v)}x(I,(C+v)\cup H)\\
      &=\sum_{H\subseteq I\setminus (L+v)}x(I,C\cup H)+\sum_{H\subseteq I\setminus (L+v)}x(I,C\cup( H+v))\\
      &=\sum_{I'\subseteq I\setminus L}x(I,C\cup I'),
\end{align*}
which concludes the proof for this part.
Part \ref{prob-root} follows directly since $x$ satisfies \eqref{eq:prob-measure} for $S=\calR$ and therefore 
\[\sum_{A\subseteq \calR}f_{x,\Theta}^{\calR}(A)=\sum_{A\subseteq \calR}x(\calR,A)=1.\]
To show part \ref{prob-nonroot}, by applying part \ref{prob-extension} with $L=\calV_{\Theta}^Y$ and $I=\calV_{\Theta}^Y\cup Y$ we get that 
\[x(\calV_{\Theta}^Y,T)=\sum_{A\subseteq Y\setminus \calV_{\Theta}^Y}x(\calV_{\Theta}^Y\cup Y,T\cup A)=\sum_{A\subseteq Y\setminus \mu(Y)}x(\calV_{\Theta}^Y,T)\cdot f_{x,\Theta}^{T,Y}(A),\]
where we used that $Y\setminus \calV_{\Theta}^Y=Y\setminus \mu(Y)$.
That concludes the proof.
\end{proof}
We first design a randomized algorithm to show the existence of $2$-approximation by rounding an optimal solution of the linear fractional program \eqref{eq:objective}-\eqref{eq:prob-consistent} defined by a $(k,\ell)$-decomposition $\Theta$.
We start by constructing a solution at the root level, and then by conditioning on this assignment we construct a solution for the children, and we continue this propagation process until we recover an integral solution.
Theorem \ref{thm:main-sparsest} is finally obtained by optimizing the running time of our algorithm as a function of $\ell$, and by performing a derandomization to get a deterministic $2$-approximation algorithm.
We provide the detailed randomized algorithm below.

\begin{algorithm}[H]
	\begin{algorithmic}[1]
		\Require{An instance $(G,D,\capac,\dem)$ with $G$ of treewidth $k$ and a positive integer number $\ell$.}
		\Ensure{A cut in the nodes $V$.}
		\State Compute a $(k,\ell)$-decomposition $\Theta=(\calY,\calE)$ of $G$.\label{step1}		
		\State Let $(x,y)$ be an optimal solution of \eqref{eq:objective}-\eqref{eq:prob-consistent}.\label{step2}
		\State Sample a subset $B_{\calR}\subseteq \calR$ according to the probability distribution $f_{x,\Theta}^{\calR}$ and let $H_{\calR}=\emptyset$.
		\For{$\ell=1$ to $\depth(\calE)$}
		\State For every bag $Y$ of level $\ell$ in the tree $\calE$, let $H_Y=H_{\mu(Y)}\cup (B_{\mu(Y)}\cap \calJ_Y)$.
		\State Sample a subset of nodes $B_{Y}\subseteq Y\setminus \mu(Y) $ according to the probability distribution $f_{x,\Theta}^{H_Y,Y}$. 									
		\EndFor
		\State Return $\calB=\bigcup_{Y\in \calY}B_{Y}$.
	\end{algorithmic}
	\caption{Randomized Rounding \label{alg:contraction}}
	\label{alg:propagation}
\end{algorithm}

For a bag $Y\in \calY$, the set $B_{\mu(Y)}\cap \calJ_Y$ is a subset of the adhesion of $Y$, and the set $H_Y$ collects the union of these subsets in the path of $\Theta$ that goes from the root to $Y$.
Then, the set $B_{Y}\subseteq Y\setminus \mu(Y)$ is sampled according to a conditional probability that depends on $H_Y$. 
The output of Algorithm \ref{alg:propagation} is a random subset of nodes in $V$ and 
we denote by $\PP_{x,\Theta}$ the probability measure induced by this random set-valued variable.
The following lemmas summarize some properties of the algorithm.

\begin{lemma} 
\label{lem:first-lemma-sparsest}	
Consider $(G,D,\capac,\dem)$ with $G$ of treewidth $k$ and let $\ell$ be a positive integer. Let $\Theta=(\calY,\calE)$ be a $(k,\ell)$-decomposition of $G$ and let $(x,y)$ be a solution satisfying \eqref{eq:cut-indicating}-\eqref{eq:prob-consistent}.
Then, the following holds:	
 \begin{enumerate}[label=(\alph*)]
 	\item For every $Y\in \calY$ and every $S\subseteq Y\cup \calV_{\Theta}^Y$, we have $\PP_{x,\Theta}(\calB\cap S=T)=x(S,T)$ for every $T\subseteq S$.\label{part-a} 
 	\item For every edge $e\in E_G$ in the supply graph, we have $\PP_{x,\Theta}(|e\cap \calB|=1)=y(e)$. \label{part-b}
 \end{enumerate}
\end{lemma}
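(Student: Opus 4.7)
The plan is to prove part (a) by induction on the level of $Y$ in the tree $\calE$ and then to deduce part (b) from part (a) in one line using constraint~\eqref{eq:cut-indicating}.

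The first ingredient I would establish, as a structural observation independent of the linear program, is the invariant
\[H_Y = \calB \cap \calV_\Theta^Y \quad \text{and} \quad \calB \cap (Y\cup \calV_\Theta^Y) = H_Y \sqcup B_Y.\]
This rests entirely on the subtree property~\ref{tree-3}: for any node $u$, the bags of $\calY$ containing $u$ form a subtree, so $u$ is ``inserted'' into $\calB$ exactly at its shallowest ancestor bag in which it appears, and in particular $B_Y \cap \calV_\Theta^Y = \emptyset$ and $\calV_\Theta^Y \cap (Y \setminus \mu(Y)) = \emptyset$. Unrolling the recursion $H_Y = H_{\mu(Y)} \cup (B_{\mu(Y)} \cap \calJ_Y)$ along the root-to-$Y$ path, together with $\calV_\Theta^Y = \calJ_Y \cup \calV_\Theta^{\mu(Y)}$, then yields the invariant.

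With the invariant in hand, I would carry out the induction. For the base case $Y = \calR$ we have $\calV_\Theta^\calR = \emptyset$, so $\calB \cap \calR = B_\calR$, and since $B_\calR$ is drawn from $f_{x,\Theta}^\calR(\cdot) = x(\calR, \cdot)$, the desired identity for any $S \subseteq \calR$ is precisely Proposition~\ref{prop:useful-x}\ref{prob-extension} with $L = S$ and $I = \calR$. For the inductive step at a non-root $Y$, let $U = Y \cup \calV_\Theta^Y$ and decompose any $T' \subseteq U$ uniquely as $T' = T'_1 \sqcup T'_2$ with $T'_1 \subseteq \calV_\Theta^Y$ and $T'_2 \subseteq Y \setminus \mu(Y)$. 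The invariant, together with conditioning on $H_Y$, yields
\[\PP_{x,\Theta}(\calB \cap U = T') \;=\; \PP_{x,\Theta}(H_Y = T'_1)\cdot f_{x,\Theta}^{T'_1, Y}(T'_2).\]
The first factor equals $x(\calV_\Theta^Y, T'_1)$ by the inductive hypothesis applied to $\mu(Y)$ with $S = \calV_\Theta^Y$ (which lies in $\mu(Y) \cup \calV_\Theta^{\mu(Y)}$ since $\calJ_Y \subseteq \mu(Y)$), and by definition the second equals $x(U, T')/x(\calV_\Theta^Y, T'_1)$; the product telescopes to $x(U, T')$. The corner case $x(\calV_\Theta^Y, T'_1) = 0$ is handled separately by noting that both sides then vanish. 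Extending from $U$ to general $S \subseteq U$ is a marginalization: $\PP_{x,\Theta}(\calB \cap S = T) = \sum_{R \subseteq U \setminus S} x(U, T \cup R) = x(S, T)$, again by Proposition~\ref{prop:useful-x}\ref{prob-extension}.

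Part (b) then follows immediately: for $e = \{u,v\} \in E_G$, property~\ref{tree-2} of the original decomposition $(\calX, \calT)$ produces a bag in $\calX$ containing both endpoints, and by the way $(\calY, \calE)$ is built some $Y \in \calY$ contains this bag, so $\{u,v\} \subseteq Y \cup \calV_\Theta^Y$ and $\{u,v\} \in \calS_\Theta$. Part (a) with $S = \{u,v\}$, combined with~\eqref{eq:cut-indicating}, gives $\PP_{x,\Theta}(|e \cap \calB| = 1) = x(\{u,v\}, \{u\}) + x(\{u,v\}, \{v\}) = y(e)$. The main obstacle I anticipate is isolating and cleanly verifying the invariant $H_Y = \calB \cap \calV_\Theta^Y$, since it requires tracking exactly which bag introduces each node into $\calB$ and using~\ref{tree-3} to argue that nodes in $\calV_\Theta^Y$ are never touched again below $Y$; once that bookkeeping is pinned down, the inductive computation reduces to a clean application of Proposition~\ref{prop:useful-x}.
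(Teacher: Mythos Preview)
Your proposal is correct and follows essentially the same approach as the paper: induction on the level of $Y$, with the base case handled by marginalizing via Proposition~\ref{prop:useful-x}\ref{prob-extension}, the inductive step by the telescoping product $x(\calV_\Theta^Y,T'_1)\cdot f_{x,\Theta}^{T'_1,Y}(T'_2)=x(Y\cup\calV_\Theta^Y,T')$, and part~(b) deduced immediately from part~(a) and~\eqref{eq:cut-indicating}. The one expository difference is that you isolate and prove the invariant $H_Y=\calB\cap\calV_\Theta^Y$ and $\calB\cap(Y\cup\calV_\Theta^Y)=H_Y\sqcup B_Y$ up front, whereas the paper uses it implicitly (e.g.\ writing $\PP_{x,\Theta}(\calB\cap S=T)=\PP_{x,\Theta}((B_Y\cup H_Y)\cap S=T)$ without separate justification); your version is arguably cleaner on this point.
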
 

\begin{proof}
We prove part \ref{part-a} by induction on the level of the bag $Y$.
Suppose first that $Y=\calR$, that is the root bag of $\Theta$.
In this case we have, by construction, that $\calV_{\Theta}^{\calR}=\emptyset$.
Given $S\subseteq \calR$, we have that $\PP_{x,\Theta}(B_{\calR}=S)=f_{x,\Theta}^{\calR}(S)=x(\calR,S)$.
On the other hand, for any $T\subseteq S$, we have
\begin{align*}
\PP_{x,\Theta}(\calB\cap S=T)&=\PP_{x,\Theta}(B_{\calR}\cap S=T)\\
&=\sum_{T'\subseteq \calR\setminus S}\PP_{x,\Theta}(B_{\calR}=T\cup T')=\sum_{T'\subseteq \calR\setminus S}x(\calR,T\cup T')=x(S,T),
\end{align*}
where the last equality holds by Proposition \ref{prop:useful-x} \ref{prob-extension}.  
Now consider any non-root bag $Y\in \calY$ and let $Z$ be its parent bag.
In particular, we have that $\calV_{\Theta}^Y\subseteq \calV_{\Theta}^{Z}\cup Z$.
Let $S\subseteq \calV_{\Theta}^Y\cup Y$ and consider $S_Y=S\cap \calV_{\Theta}^Y$ and $\tilde S=S\setminus \calV_{\Theta}^Y$.
Then, we have
\begin{align*}
\PP_{x,\Theta}(B_{Y}\cup H_Y=S)&=\PP_{x,\Theta}(H_Y=S_Y,B_{Y}=\tilde S)\\
								&=\PP_{x,\Theta}(H_Y=S_Y)\cdot \PP_{x,\Theta}(B_{Y}=\tilde S|H_Y=S_Y)\\
								&=\PP_{x,\Theta}(\calB\cap \calV_{\Theta}^Y=S_Y)\cdot \PP_{x,\Theta}(B_{Y}=\tilde S|H_Y=S_Y)\\
								&=x(\calV_{\Theta}^Y,S_Y)\cdot f_{x,\Theta}^{S_Y,Y}(\tilde S)\\
								&=x(\calV_{\Theta}^Y,S_Y)\cdot\frac{x(\calV_{\Theta}^Y\cup Y,S_Y\cup \tilde S)}{x(\calV_{\Theta}^Y,S_Y)}\\
								&=x(\calV_{\Theta}^Y\cup Y,S_Y\cup \tilde S)=x(\calV_{\Theta}^Y\cup Y,S),
\end{align*}
where the fourth equality holds since $\calV_{\Theta}^Y\subseteq \calV_{\Theta}^{Z}\cup Z$ together with the inductive step and the choice of $B_Y$ in Algorithm \ref{alg:propagation}. 
Given $S\subseteq \calV_{\Theta}^Y\cup Y$ and $T\subseteq S$, we have
\begin{align*}
	\PP_{x,\Theta}(\calB\cap S=T)&=\PP_{x,\Theta}((B_{Y}\cup H_Y)\cap S=T)\\
	&=\sum_{T'\subseteq (\calV_{\Theta}^Y\cup Y)\setminus S}\PP_{x,\Theta}(B_{Y}\cup H_Y=T\cup T')\\
	&=\sum_{T'\subseteq (\calV_{\Theta}^Y\cup Y)\setminus S}x(\calV_{\Theta}^Y\cup Y,T\cup T')=x(S,T),
\end{align*}
where the last equality holds by Proposition \ref{prop:useful-x} \ref{prob-nonroot}. 
We now prove part \ref{part-b}.
Consider any edge $e=\{u,v\}\in E_G$ and let $Y\in \calY$ be such that $e\in Y$. 
We know this bag exists since $\Theta$ is a tree decomposition by Lemma \ref{lem:super-node}.
Since $x$ satisfies \eqref{eq:cut-indicating}, together with part \ref{part-a} we have that $\PP_{x,\Theta}(|e\cap \calB|=1)=\PP_{x,\Theta}(e\cap \calB=\{u\})+\PP_{x,\Theta}(e\cap \calB=\{v\})=x(e,\{u\})+x(e,\{v\})=y(e)$.
\end{proof}

\begin{lemma}
\label{lem:second-lemma-sparsest}
Consider $(G,D,\capac,\dem)$ with $G$ of treewidth $k$ and let $\ell$ be a positive integer. Let $\Theta=(\calY,\calE)$ be a $(k,\ell)$-decomposition of $G$ and let $(x,y)$ be a solution satisfying \eqref{eq:cut-indicating}-\eqref{eq:prob-consistent}.
Then, for every edge $e\in E_D$ in the demand graph we have $\PP_{x,\Theta}(|e\cap \calB|=1)\ge y(e)/2$.
\end{lemma}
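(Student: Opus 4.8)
My plan is to split into two cases according to how $u$ and $v$ are placed in the $(k,\ell)$-decomposition $\Theta$: a ``local'' case handled directly by Lemma~\ref{lem:first-lemma-sparsest}\ref{part-a}, and a ``non-local'' case handled by a conditional-independence argument along the tree.

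\emph{Case 1: there is a bag $Y\in\calY$ with $\{u,v\}\subseteq Y\cup\calV_\Theta^Y$.} Then $\{u,v\}\in\calS_\Theta(Y,Y)\subseteq\calS_\Theta$ and likewise $\{u\},\{v\}\in\calS_\Theta$, so Lemma~\ref{lem:first-lemma-sparsest}\ref{part-a} applied with $S=\{u,v\}$ gives $\PP_{x,\Theta}(\calB\cap\{u,v\}=\{u\})=x(\{u,v\},\{u\})$ and $\PP_{x,\Theta}(\calB\cap\{u,v\}=\{v\})=x(\{u,v\},\{v\})$. Summing these two disjoint events and using constraint~\eqref{eq:cut-indicating} for the pair $u,v$ yields $\PP_{x,\Theta}(|e\cap\calB|=1)=x(\{u,v\},\{u\})+x(\{u,v\},\{v\})=y(e)\ge y(e)/2$.

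\emph{Case 2: no such bag exists.} By condition~\ref{tree-3} the bags containing $u$ and the bags containing $v$ form two vertex-disjoint subtrees of $\calE$; let $Y_u,Y_v$ denote their respective bags closest to $\calR$ (so $Y_u\neq Y_v$, and $u\in B_{Y_u}$ iff $u\in\calB$, similarly for $v$, since $Y_u$ is the unique bag with $u\in Y_u\setminus\mu(Y_u)$). After possibly swapping the names of $u$ and $v$ we may assume $Y_v$ is not an ancestor of $Y_u$, hence $Y_v$ does not lie on the path from $Y_u$ to $\calR$. The key claim is then: \emph{conditioned on the value of $\calB\cap\calV_\Theta^{Y_v}$, the events $\{u\in\calB\}$ and $\{v\in\calB\}$ are independent.} Indeed, in Algorithm~\ref{alg:propagation} each $B_Z$ is a deterministic function of $H_Z$ and fresh independent randomness $\omega_Z$; unrolling the recursion, $\mathbf 1[u\in\calB]=\mathbf 1[u\in B_{Y_u}]$ is a function of $\{\omega_Z: Z$ on the path from $Y_u$ to $\calR\}$, while given $\calB\cap\calV_\Theta^{Y_v}=S$ we have $H_{Y_v}=S$ (as in the proof of Lemma~\ref{lem:first-lemma-sparsest}\ref{part-a}) and $\mathbf 1[v\in\calB]=\mathbf 1[v\in B_{Y_v}]$ depends only on $S$ and the fresh $\omega_{Y_v}$; since $Y_v$ is neither on the path from $Y_u$ to $\calR$ nor a proper ancestor of itself, $\omega_{Y_v}$ is independent of everything the other side sees, which gives the claim.

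It remains to extract the numerical bound. Set $a=x(\{u,v\},\{u\})$, $b=x(\{u,v\},\{v\})$, $c=x(\{u,v\},\{u,v\})$, $d=x(\{u,v\},\emptyset)$; by constraint~\eqref{eq:prob-positive}--\eqref{eq:prob-measure}, $a,b,c,d\ge 0$ and $a+b+c+d=1$, and $y(e)=a+b$. By Lemma~\ref{lem:first-lemma-sparsest}\ref{part-a} for the singletons together with Proposition~\ref{prop:useful-x}\ref{prob-extension}, $\PP_{x,\Theta}(u\in\calB)=x(\{u\},\{u\})=a+c$ and $\PP_{x,\Theta}(v\in\calB)=x(\{v\},\{v\})=b+c$, so $\PP_{x,\Theta}(|e\cap\calB|=1)=(a+b)+2c-2\PP_{x,\Theta}(u\in\calB,v\in\calB)$, and it suffices to prove $\PP_{x,\Theta}(u\in\calB,v\in\calB)\le c+\tfrac14(a+b)$: this, with $4ab\le(a+b)^2\le a+b$ (valid since $a+b\le 1$), gives $\PP_{x,\Theta}(|e\cap\calB|=1)\ge (a+b)/2$. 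To bound the joint probability I would use the conditional independence: $\PP_{x,\Theta}(u\in\calB,v\in\calB)=\EE\big[\,\hat p_S\,\hat q_S\,\big]$ where $S=\calB\cap\calV_\Theta^{Y_v}$, $\hat q_S=\PP_{x,\Theta}(v\in\calB\mid S)$ is exactly the LP conditional $x(\calV_\Theta^{Y_v}\!+\!v,S\!+\!v)/x(\calV_\Theta^{Y_v},S)$ of~\eqref{eq:conditional-prob}, and $\hat p_S=\PP_{x,\Theta}(u\in\calB\mid S)$; expanding $\hat q_S$ and splitting the LP value $x(\calV_\Theta^{Y_v}\!+\!v,S\!+\!v)$ through the consistency constraints~\eqref{eq:prob-consistent} into the ``$v$ only'' and ``$u$ and $v$'' parts of $x(\calV_\Theta^{Y_v}\!+\!u\!+\!v,\cdot)$ reduces the target to an estimate of the form $\sum_S \hat p_S\,x(\calV_\Theta^{Y_v}\!+\!u\!+\!v,S\!+\!v)\le \tfrac14(a+b)$, which is then closed using that $\hat p_S$ is itself a conditional of the LP distribution and the constraints tying $x(\{u,v\},\cdot)$ to $x(\calV_\Theta^{Y_v}\!+\!u\!+\!v,\cdot)$. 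The main obstacle is precisely this last step: turning the ``extra'' independence produced by the propagation (relative to the correlations recorded by the LP variable $x(\{u,v\},\cdot)$) into the clean inequality $\PP_{x,\Theta}(u\in\calB,v\in\calB)\le c+\tfrac14(a+b)$; the conditional-independence claim and the bookkeeping of which adhesions separate $u$ from $v$ along $\calE$ are the technical heart, while the concluding arithmetic is routine.
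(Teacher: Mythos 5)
Your plan is on the right track — condition on a tree separator, use the conditional independence built into the top-down propagation of Algorithm~\ref{alg:propagation}, and compare the resulting joint to the one recorded by the LP — and this is indeed the same high-level strategy as the paper. However, there are two genuine gaps.

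\textbf{The separator.} You condition on $\calB\cap\calV_\Theta^{Y_v}$ after arranging that $Y_v$ is not an ancestor of $Y_u$. For $\hat q_S$ this is fine: since $\calV_\Theta^{Y_v}+v\subseteq \calV_\Theta^{Y_v}\cup Y_v$, Lemma~\ref{lem:first-lemma-sparsest}\ref{part-a} makes $\hat q_S$ equal to the LP ratio $x(\calV_\Theta^{Y_v}+v,S+v)/x(\calV_\Theta^{Y_v},S)$. But for $\hat p_S=\PP_{x,\Theta}(u\in\calB\mid \calB\cap\calV_\Theta^{Y_v}=S)$ the set $\calV_\Theta^{Y_v}+u$ need not lie inside $Y'\cup\calV_\Theta^{Y'}$ for any single bag $Y'$ (for instance when $Y_u,Y_v$ are incomparable, or when $Y_u$ is a strict ancestor of $Y_v$ and $u\notin\calV_\Theta^{Y_v}$), so Lemma~\ref{lem:first-lemma-sparsest}\ref{part-a} does not identify $\hat p_S$ with an LP ratio — yet your closing step explicitly relies on ``$\hat p_S$ is itself a conditional of the LP distribution.'' The paper instead conditions on the symmetric separator $\Lambda=\calV_\Theta^{Y}\cup\calJ_{Z_s}\cup\calJ_{Z_t}$, where $Y$ is the lowest common ancestor of $Y_s,Y_t$ and $Z_s,Z_t$ are the two children of $Y$ toward $Y_s,Y_t$. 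This $\Lambda$ satisfies $\Lambda\subseteq\calV_\Theta^Y\cup Y$, so $\PP_{x,\Theta}(\calB\cap\Lambda=T)=x(\Lambda,T)$; and both conditional marginals of $s$ and $t$ match those of the auxiliary measure $\QQ_e$ built from $x(\calC_e,\cdot)$.

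\textbf{The factor-2 inequality.} You correctly reduce the lemma to $\PP_{x,\Theta}(u\in\calB,v\in\calB)\le c+\tfrac14(a+b)$, but you label the proof of this ``the main obstacle'' and do not supply it. The missing ingredient is the Claim inside the paper's proof: for any $\{0,1\}$-valued random variables $G,K$, one has $\Pr(G\neq K)\le 2\bigl(\Pr(G=1)\Pr(K=0)+\Pr(G=0)\Pr(K=1)\bigr)$. Applied conditionally on each $T\subseteq\Lambda$ with $G,K$ the LP-coupled indicators $\omega_{s,T},\omega_{t,T}$ (which have the same marginals as the algorithm's $\beta_{s,T},\beta_{t,T}$, but the latter are conditionally independent), it gives $\QQ_e(|e\cap W|=1\mid T)\le 2\,\PP_{x,\Theta}(|e\cap\calB|=1\mid T)$; averaging over $T$ with weights $x(\Lambda,T)$ yields $y(e)\le 2\,\PP_{x,\Theta}(|e\cap\calB|=1)$. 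In your notation the Claim is exactly the per-$T$ inequality $p_Tq_T\le \tfrac{r_T}{2}+\tfrac{p_T+q_T}{4}$ (with $p_T,q_T$ the conditional marginals and $r_T$ the conditional LP-joint), proved by a short case split on whether $p_T+q_T\le 1$. So the skeleton is right, but the separator must be symmetric in $s,t$ so that both conditionals are LP ratios under Lemma~\ref{lem:first-lemma-sparsest}\ref{part-a}, and you must actually prove the elementary factor-2 inequality rather than defer it.
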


\begin{proof}
Let $e=\{s,t\}\in E_D$ be a demand edge.
When $e\in E_G$ we are done since $\PP_{x,\Theta}(|e\cap \calB|=1)= y(e)$ by Lemma \ref{lem:first-lemma-sparsest} \ref{part-b}.
Suppose in what follows that $e\notin E_G$, and let $Y_s$ and $Y_t$ be the least depth bags in the tree $\calE$ such that $s\in Y_s$ and $t\in Y_t$.
Furthermore, let $Y$ be the lowest common ancestor of the bags $Y_s, Y_t$ in the tree $\calY$. 
Let $\calC_{e}=(Y_s\cup \calV_{\Theta}^{Y_s})\cup (Y_t\cup\calV_{\Theta}^{Y_t})$.
For every $T\subseteq \calC_{e}$ consider the value $g_e(T)=x(\calC_e,T)$.
Since $x$ satisfies \eqref{eq:prob-measure}, we have that 
$\sum_{T\subseteq \calC_{e}}g_e(T)=1$ and therefore $g_e$ defines a probability mass function over $\calS_{\Theta}(Y_s,Y_t)$, which is the power set of $\calC_e$.
Consider the set-valued random variable $W$ distributed according to $g_e$ and let $\QQ_e$ the probability measure induced by this random variable.
Then, we have
\begin{align*}
\QQ_e(|e\cap W|=1)&=\QQ_e(e\cap W=\{s\})+\QQ_e(e\cap W=\{t\})\\
				  &=\sum_{C'\subseteq \calC_e\setminus \{e\}}x(\calC_e,s+C')+\sum_{C'\subseteq \calC_e\setminus \{e\}}x(\calC_e,t+C')\\
				  &=x(e,s)+x(e,t)=y(e),
\end{align*}
where the third equality holds by Proposition \ref{prop:useful-x} \ref{prob-extension} and the last equality holds since $x$ satisfies condition \eqref{eq:cut-indicating}.
Let $Z_s$ and $Z_t$ be the children bags of $Y$ such that $Z_s$ belongs to unique path from $Y_s$ to the root $\calR$ and $Z_t$ belongs to unique path from $Y_t$ to the root $\calR$, in the tree $\calE$.
Define the set $\Lambda=\calV_{\Theta}^Y\cup \calJ_{Z_s}\cup \calJ_{Z_t}$. 
Observe that 
\begin{align*}
	\PP_{x,\Theta}(|e\cap \calB|=1)&=\sum_{T\subseteq \Lambda}\PP_{x,\Theta}(|e\cap \calB|=1\;|\;\calB\cap \Lambda=T)\cdot \PP_{x,\Theta}(\calB\cap \Lambda=T)\\
	&=\sum_{T\subseteq \Lambda}\PP_{x,\Theta}(|e\cap \calB|=1\;|\;\calB\cap \Lambda=T)\cdot x(\Lambda,T),
\end{align*}
where the last equality holds by Lemma \ref{lem:first-lemma-sparsest} \ref{part-a} and the fact that $\Lambda\subseteq \calV_{\Theta}^Y\cup Y$.
On the other hand, for any $L\subseteq W$ and every $I\subseteq L$ we have
\begin{equation}
\QQ_{e}(W\cap L=I)=\sum_{C'\subseteq \calC_e\setminus \Lambda}x(\calC_e,I\cup C')=x(L,T),\label{eq:thing1}
\end{equation}
where the last equality holds by Proposition \ref{prop:useful-x} \ref{prob-extension}. 
Therefore, we have 
\begin{align*}
	y(e)=\QQ_{e}(|e\cap W|=1)&=\sum_{T\subseteq \Lambda}\QQ_{e}(|e\cap W|=1\;|\;W\cap \Lambda=T)\cdot \QQ_{e}(W\cap \Lambda=T)\\
	&=\sum_{T\subseteq \Lambda}\QQ_{e}(|e\cap W|=1\;|\;W\cap \Lambda=T)\cdot x(\Lambda,T), 
\end{align*}
where the last equality holds by applying \eqref{eq:thing1} with $L=\Lambda$.
Then, in order to conclude the lemma it is sufficient to show that $\QQ_{e}(|e\cap W|=1\;|\;W\cap \Lambda=T)\le 2\cdot \PP_{x,\Theta}(|e\cap \calB|=1\;|\;\calB\cap \Lambda=T)$.
Given $T\subseteq \Lambda$, consider the random variable $\omega_{s,T}\in \{0,1\}$ that indicates whether $s\in W$ given $W\cap \Lambda=T$, and let $\beta_{s,T}\in \{0,1\}$ be the random variable that indicates whether $s\in \calB$ given $\calB\cap \Lambda=T$.
We define analogously the random variables $\omega_{t,T}$ and $\beta_{t,T}$.
Since $s,t\notin \Lambda$, we observe that for any $T\subseteq \Lambda$ and $v\in \{s,t\}$ it holds that $v\in W$ and $W\cap \Lambda=T$ if and only if $W\cap (\Lambda+v)=T+v$.
Therefore, for every $T\subseteq \Lambda$, we have that 
\begin{align*}
\QQ_e(\omega_{s,T}=1)&=\frac{\QQ_e(s\in W,W\cap \Lambda=T)}{\QQ_e(W\cap \Lambda=T)}=\frac{x(\Lambda+s,T+s)}{x(\Lambda,T)}=\PP_{x,\Theta}(\beta_{s,T}=1),\\
\QQ_e(\omega_{t,T}=1)&=\frac{\QQ_e(t\in W,W\cap \Lambda=T)}{\QQ_e(W\cap \Lambda=T)}=\frac{x(\Lambda+t,T+t)}{x(\Lambda,T)}=\PP_{x,\Theta}(\beta_{t,T}=1),
\end{align*}
where, in both cases, the first equality comes from the above observation and \eqref{eq:thing1} and the second equality is a consequence of the above observation and Proposition \ref{prop:useful-x} \ref{prob-extension}.
We conclude that for every $T\subseteq \Lambda$ the random variables $\omega_{v,T}$ and $\beta_{v,T}$ are identically distributed, for $v\in \{s,t\}$.
Furthermore, by construction in Algorithm \ref{alg:propagation}, the random variables $\omega_{s,T}$ and $\beta_{s,T}$ are independent.
\begin{claim}
	Suppose we have two random variables $G$ and $K$ taking values in $\{0,1\}$.
	Then, we have that $\Pr(G\ne K)\le 2(\Pr(G=1)\Pr(K=0)+ \Pr(G=0)\Pr(K=1))$.
\end{claim}
\noindent We show how to conclude the lemma using the claim. 
Taking $G=\omega_{s,T}$ and $K=\omega_{t,T}$, we have
\begin{align*}
	\QQ_e(|e\cap W|=1\;|\;W\cap \Lambda=T)&=\QQ_e(\omega_{s,T}\ne \omega_{t,T})\\
	&\le 2(\QQ_e(\omega_{s,T}=1)\QQ_e(\omega_{t,T}=0)+ \QQ_e(\omega_{s,T}=0)\QQ_e(\omega_{t,T}=1))\\
	&= 2(\PP_{x,\Theta}(\beta_{s,T}=1)\PP_{x,\Theta}(\beta_{t,T}=0)+ \PP_{x,\Theta}(\beta_{s,T}=0)\PP_{x,\Theta}(\beta_{t,T}=1))\\
	&=2\cdot \PP_{x,\Theta}(\beta_{s,T}\ne \beta_{t,T})=2\cdot \PP_{x,\Theta}(|e\cap \calB|=1\;|\;\calB\cap \Lambda=T),
\end{align*}
which concludes the lemma.
We now show how to prove the claim.
Let $a=\Pr(G=1)$ and $b=\Pr(K=1)$.
Then, we have that $\Pr(G\ne K)\le a(1-b)+b(1-a)+\min\{ab,(1-a)(1-b)\}$.
We prove next that $\min\{ab,(1-a)(1-b)\}\le a(1-b)+b(1-a)$ for every $a,b\in [0,1]$.
Having this inequality, we get that $\Pr(G\ne K)\le 2a(1-b)+2b(1-a)$ and therefore the claim holds.
To prove the inequality consider two cases.
Suppose first that $a+b\le 1$.
In particular, we have that $ab\le 1-a-b+ab=(1-a)(1-b)$.
Furthermore, $ab\le a^2+b^2\le a(1-b)+b(1-a)$, where the last inequality holds since $a,b\in [0,1]$ and $a+b\le 1$ implies that $a\le 1-b$ and $b\le 1-a$.
Now suppose that $a+b>1$.
In particular, we have $ab>1-a-b+ab=(1-a)(1-b)$.
On the other hand, $(1-a)(1-b)\le (1-a)^2+(1-b)^2<b(1-a)+a(1-b)$, where the last inequality holds since $a,b\in [0,1]$ and $a+b>1$ implies that $a>1-b$ and $b>1-a$.
This concludes the proof of the inequality.
\end{proof}

\begin{definition}
Let $G$ be a graph of treewidth $k$, let $\Theta=(\calY,\calE)$ be a $(k,\ell)$-decomposition of $G$, and consider a node $u\in V$.
Let $X$ be the least depth bag in the tree containing the node $u$.
Given a bag $Z\in \calP_{\Theta}^X$ and $H\subseteq \calV_{\Theta}^Z\cup Z$, we say that a pair $(M,N)$ is an $H$-extension for the node $u$ if the following holds:
\begin{enumerate}[label=(\roman*)]
 	\item $N\subseteq X\setminus \calV_{\Theta}^X$ and $u\in N$,
 	\item $M=(H\cap \calV_{\Theta}^X)\cup L$ where $L\subseteq \calV_{\Theta}^X\setminus (\calV_{\Theta}^Z\cup Z)$.
\end{enumerate}	
We denote by $\Delta_{\Theta}(H,u)$ the set of $H$-extensions for $u$.
\end{definition}
Observe that for any node $u$ and $X$ being the least depth bag containing $u$, for any bag $Z\in \calP_{\Theta}^X$ and any $H\subseteq \calV_{\Theta}^Z\cup Z$, the set $\Delta_{\Theta}(H,u)$ has cardinality at most 
\begin{equation}2^{O(k(2^{\ell}+\log(n)/\ell))}\label{obs:delta-size}\end{equation}
when $\Theta$ is a $(k,\ell)$-decomposition.
This holds since, by Lemma \ref{lem:super-node}, we have $|X\setminus \calV_{\Theta}^X|\in O(2^{\ell}k)$ and $|\calV_{\Theta}^X\setminus (\calV_{\Theta}^Z\cup Z)|\in O(k\log(n)/\ell)$.
We need one more lemma before proving Theorem \ref{thm:main-sparsest}.

\begin{lemma}
\label{lem:lambert}
For every positive real value $x\ge 4$, there exists a unique value $\alpha^{\star}$ such that alpha $\alpha^{\star}2^{\alpha^{\star}} = x$, and it satisfies the inequality
$2^{\lceil \alpha^{\star}\rceil}+x/\lceil \alpha^{\star}\rceil\le 12x/\log(x)$.
\end{lemma}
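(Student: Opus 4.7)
The plan is to first settle existence and uniqueness of $\alpha^{\star}$, and then derive the two bounds $2^{\lceil \alpha^{\star}\rceil}\le 2x/\alpha^{\star}$ and $x/\lceil \alpha^{\star}\rceil\le x/\alpha^{\star}$, combined with a sufficiently good lower bound on $\alpha^{\star}$. For existence and uniqueness, the function $f(\alpha)=\alpha\, 2^{\alpha}$ is strictly increasing on $(0,\infty)$ since $f'(\alpha)=2^{\alpha}(1+\alpha\ln 2)>0$, continuous, and satisfies $f(0^+)=0$ and $f(\alpha)\to\infty$. Hence for any $x\ge 4$ there is a unique $\alpha^{\star}>0$ with $f(\alpha^{\star})=x$; moreover $f(1)=2<4\le x$ forces $\alpha^{\star}>1$.

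The key step is to pin down $\alpha^{\star}$ fairly tightly. Taking $\log_2$ on both sides of $\alpha^{\star} 2^{\alpha^{\star}}=x$ yields
\[
\log_2 x \;=\; \alpha^{\star}+\log_2 \alpha^{\star}.
\]
Since $\alpha^{\star}>1$ gives $\log_2\alpha^{\star}>0$, this identity immediately implies the upper bound $\alpha^{\star}\le \log_2 x$. Substituting that upper bound back into $\log_2\alpha^{\star}\le \log_2\log_2 x$ in the same identity gives the lower bound
\[
\alpha^{\star} \;\ge\; \log_2 x - \log_2\log_2 x.
\]
The heart of the argument is now to show that $\log_2 x-\log_2\log_2 x\ge \tfrac{1}{4}\log_2 x$ for every $x\ge 4$. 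Setting $y=\log_2 x\ge 2$, this reduces to $\tfrac{3}{4}y\ge \log_2 y$ on $[2,\infty)$, which holds at $y=2$ (where it reads $\tfrac{3}{2}\ge 1$) and whose left-minus-right derivative $\tfrac{3}{4}-\tfrac{1}{y\ln 2}$ is nonnegative on $[2,\infty)$; a one-line monotonicity check closes this step.

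Combining the pieces, $\lceil\alpha^{\star}\rceil\le \alpha^{\star}+1$ together with $\alpha^{\star}2^{\alpha^{\star}}=x$ gives
\[
2^{\lceil\alpha^{\star}\rceil}\;\le\; 2\cdot 2^{\alpha^{\star}} \;=\; \frac{2x}{\alpha^{\star}}, \qquad \frac{x}{\lceil\alpha^{\star}\rceil}\;\le\; \frac{x}{\alpha^{\star}},
\]
and adding these two inequalities and invoking $\alpha^{\star}\ge (\log_2 x)/4$ from the previous step yields
\[
2^{\lceil\alpha^{\star}\rceil}+\frac{x}{\lceil\alpha^{\star}\rceil} \;\le\; \frac{3x}{\alpha^{\star}} \;\le\; \frac{12\,x}{\log_2 x},
\]
which is the claimed bound; the constant $12$ arises precisely as $3\cdot 4$.

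The only real obstacle is the calibration of the middle step: the crude estimate $\alpha^{\star}\ge \log_2 x-\log_2\log_2 x$ must be strong enough to imply $\alpha^{\star}\ge (\log_2 x)/4$, and this is exactly why the hypothesis $x\ge 4$ (equivalently $\log_2 x\ge 2$) is required. Everything else is routine algebraic rearrangement of the defining identity.
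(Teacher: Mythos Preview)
Your proof is correct and follows essentially the same route as the paper: both bound $2^{\lceil\alpha^{\star}\rceil}+x/\lceil\alpha^{\star}\rceil\le 3x/\alpha^{\star}$ via $\lceil\alpha^{\star}\rceil\le\alpha^{\star}+1$ and $\alpha^{\star}2^{\alpha^{\star}}=x$, and then plug in a lower bound $\alpha^{\star}\ge c\log_2 x$ with $c\ge 1/4$. The only difference is that the paper invokes the Lambert $W$ function and the standard estimate $W_0(z)\ge\tfrac12\ln z$ to get $\alpha^{\star}\ge 0.27\log_2 x$, whereas you obtain the slightly weaker but sufficient bound $\alpha^{\star}\ge\tfrac14\log_2 x$ by the elementary substitution $\log_2 x=\alpha^{\star}+\log_2\alpha^{\star}$ and a direct check that $\tfrac34 y\ge\log_2 y$ for $y\ge 2$.
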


\begin{proof}
It is well-known that the unique real solution of the equation $\alpha 2^\alpha=x$ is given by $W_0(x\ln(2))/\ln(2)$, where $W_0$ is the principal branch of the Lambert function \cite{corless1996lambertw}.
Furthermore, we have $W_0(z)\ge \ln(z)-\ln\ln(z)\ge \ln(z)/2$ for every $z\ge e$, and therefore we get 
\begin{equation}
\alpha^{\star}=\frac{W_0(x\ln(2))}{\ln(2)}\ge \frac{\ln(x)+\ln\ln(2)}{2\ln(2)}\ge 0.4\ln(x)\ge 0.27\log(x)\label{ineq:lambert}
\end{equation} 
for every $x\ge 4$.
Then, we have 
\begin{equation*}
2^{\lceil \alpha^{\star}\rceil}+\frac{x}{\lceil \alpha^{\star}\rceil}\le 2^{ \alpha^{\star}+1}+\frac{x}{ \alpha^{\star}}=\frac{3x}{ \alpha^{\star}}< \frac{12x}{\log(x)},
\end{equation*}
where the equality holds since $\alpha^{\star}$ solves $\alpha 2^\alpha=x$ and the last inequality holds by \eqref{ineq:lambert}.
\end{proof}

\begin{proof}[Proof of Theorem \ref{thm:main-sparsest}]
Let $\calI=(G,D,c,d)$ be an instance of the non-uniform sparsest cut problem.
Recall that we denote by $n$ the number of nodes in the instance.
Let $\alpha^{\star}_n$ be the unique positive real solution of the equation $\alpha 2^\alpha =\log(n)$ and let $\ell^{\star}=\lceil \alpha^{\star}_n\rceil$.
We run Algorithm \ref{alg:propagation} over the instance $\calI$, using the value $\ell^{\star}$, and let $\Theta$ be the $(k,\ell^{\star})$-decomposition computed in step 1 of the algorithm. 
Let $(x,y)$ be an optimal solution of the optimization problem \eqref{eq:objective}-\eqref{eq:prob-consistent} solved in step 2 of the algorithm, and we denote by $\opt_{\LP}$ the optimal value $\sum_{e\in E_G}\capac(e)y(e)/\sum_{e\in E_D}\dem(e)y(e)$. 
Let $\calB$ be the solution computed by the randomized algorithm.
For every pair of nodes $e=\{u,v\}\subseteq V$, with $u\ne v$, let $\xi(e)$ be equal to one if $|e\cap \calB|=1$ and zero otherwise.
This random variable indicates when a pair of nodes is cut by the algorithm solution.
Consider $\calC=\sum_{e\in E_G}\capac(e)\xi(e)$ and $\calD=\sum_{e\in E_D}\dem(e)\xi(e)$.
By Lemmas \ref{lem:first-lemma-sparsest} \ref{part-b} and \ref{lem:second-lemma-sparsest} we have that 
\begin{align*}
\EE_{x,\Theta}(\calC)&=\sum_{e\in E_G}\capac(e)\cdot \EE_{x,\Theta}(\xi(e))=\sum_{e\in E_G}\capac(e)\cdot \PP_{x,\Theta}(|e\cap \calB|=1)=\sum_{e\in E_G}\capac(e)y(e),\\
\EE_{x,\Theta}(\calD)&=\sum_{e\in E_D}\capac(e)\cdot \EE_{x,\Theta}(\xi(e))=\sum_{e\in E_D}\dem(e)\cdot \PP_{x,\Theta}(|e\cap \calB|=1)\ge \frac{1}{2}\sum_{e\in E_D}\dem(e)y(e),
\end{align*}
and therefore we get $\EE_{x,\Theta}(\calC)/\EE_{x,\Theta}(\calD)\le 2\cdot \opt_{\LP}\le 2\cdot\min_{S\subseteq V} \phi(S)$, where the last inequality holds since the sparsest cut of value $\min_{S\subseteq V} \phi(S)$ defines a feasible solution for \eqref{eq:objective}-\eqref{eq:prob-consistent}.
We now show how to derandomize the solution $\calB$ to get a deterministic 2-approximation.
We use the method of conditional expectations.
Define the random variable $\Gamma=\calC-2\cdot \calD\cdot \opt_{\LP}$.
Then, we have that $0\ge \EE_{x,\Theta}(\Gamma)=\EE(\EE_{x,\Theta}(\Gamma|B_{\calR}))$ 
and therefore there exists $R' \subseteq \calR$ such that $\EE_{x,\Theta}(\Gamma|B_{\calR}=R')\le 0$.
Fix any subset $Y'_1\subseteq \calR$ with $\EE_{x,\Theta}(\Gamma|B_{\calR}=Y'_1)\le 0$ and let $\calR=Y_1,Y_2,\ldots,Y_{|\calY|}$ be the bags visited according to some BFS ordering. 
Suppose we have computed for some $t\in \{1,\ldots,|\calY|-1\}$ the set $A_t=\cup_{\ell=1}^t Y'_{\ell}\subseteq \cup_{\ell=1}^tY_{\ell}$, with $Y'_{\ell}\subseteq Y_{\ell}\setminus \mu(Y_{\ell})$ for each $\ell\in \{1,\ldots,t\}$, and such that $\EE_{x,\Theta}(\Gamma|\calB\cap (\cup_{\ell=1}^tY_{\ell})=A_t)\le 0$.
Then, we have 
\begin{align*}
0&\ge \EE_{x,\Theta}(\Gamma|\calB\cap (\cup_{\ell=1}^tY_{\ell})=A_t)\\
&=\sum_{Y'\subseteq Y\setminus \mu(Y)}\EE_{x,\Theta}(\Gamma|\calB\cap (\cup_{\ell=1}^tY_{\ell})=A_t, B_{Y_{t+1}}=Y')\cdot \PP_{x,\Theta}(B_{Y_{t+1}}=Y')\\
&=\sum_{Y'\subseteq Y\setminus \mu(Y)}\EE_{x,\Theta}(\Gamma|\calB\cap (\cup_{\ell=1}^{t+1}Y_{\ell})=A_t\cup Y')\cdot \PP_{x,\Theta}(B_{Y_{t+1}}=Y'),
\end{align*}
and therefore there exists $Y'\subseteq Y_{t+1}\setminus \mu(Y_{t+1})$ such that $\EE_{x,\Theta}(\Gamma|\calB\cap (\cup_{\ell=1}^{t+1}Y_{\ell})=A_t\cup Y')\le 0$.
Fix any of these subsets and we denote it by $Y'_{t+1}$. 
By the end of this process, let $\calA$ be the union of $Y'_1,\ldots,Y'_{|\calY|}$.
By construction, we have recovered a solution such that $\EE_{x,\Theta}(\Gamma|\calB=\calA)\le 0$ and therefore $\calA$ is a 2-approximation.

We now study the running time of the derandomization, and more specifically, the running time that we need to compute the conditional expectations.
Let $t\in \{1,\ldots,|\calY|\}$ and let $T\subseteq \cup_{\ell=1}^tY_{\ell}$.
To compute the value of the expectation $\EE_{x,\Theta}(\Gamma|\calB\cap (\cup_{\ell=1}^tY_{\ell})=T)$, it is sufficient to compute the probability value $\PP_{x,\Theta}(|e\cap \calB|=1|\calB\cap (\cup_{\ell=1}^tY_{\ell})=T)$ for any $e\in E_G$ or $e\in E_D$.
Furthermore, when $e\subseteq \cup_{\ell=1}^tY_{\ell}$ the value of the probability is determined and equal to one or zero.
Then, we suppose that $e=\{u,v\}$ is not contained in $\cup_{\ell=1}^tY_{\ell}$.
For every node $a\in V\setminus (Y_1\cup \cdots \cup Y_t)$ let $X_a$ be the least depth bag in $\calY$ that contains $a$.
In particular, we have that $X_a\notin \{Y_1,\ldots,Y_t\}$ and let $Z_a$ be the lowest bag in $\{Y_1,\ldots,Y_t\}$ such that $Z_a$ belongs to the path from $X_a$ to the root.
For every $a\in V\setminus (Y_1\cup \cdots \cup Y_t)$ consider the quantity
\[g_a=\PP_{x,\Theta}\Big(a\in \calB\;\Big|\;\calB\cap (\calV_{\Theta}^{Z_a}\cup Z_a)=T\cap (\calV_{\Theta}^{Z_a}\cup Z_a)\Big).\]
\noindent{\bf Case 1.} Suppose that $u\notin \cup_{\ell=1}^tY_{\ell}$ and $v\notin \cup_{\ell=1}^tY_{\ell}$ and that $Z_u\ne Z_v$.
Then, $\calP_{\Theta}^{Z_u}$ and $\calP_{\Theta}^{Z_v}$ are contained in the subtree induced by the bags $Y_1,\ldots,Y_t$.
By construction in Algorithm \ref{alg:propagation} we have that $\PP_{x,\Theta}(|e\cap \calB|=1|\calB\cap (\cup_{\ell=1}^tY_{\ell})=T)=g_u(1-g_v)+g_v(1-g_u)$.
Furthermore, by denoting $T_a=T\cap (\calV_{\Theta}^{Z_a}\cup Z_a)$, we have
\[g_a=\sum_{(M,N)\in \Delta_{\Theta}(T_a,u)}f_{x,\Theta}^{M,X_a}(N)=\sum_{(M,N)\in \Delta_{\Theta}(T_a,u)}\frac{x(\calV_{\Theta}^{X_a}\cup X_a,M\cup N)}{x(\calV_{\Theta}^{X_a},M)}\]
for each $a\in \{u,v\}$.
By the observation in \eqref{obs:delta-size}, $g_u$ and $g_v$ can be computed in time $2^{O(k(2^{\ell^{\star}}+\log(n)/\ell^{\star}))}$.\\

\noindent{\bf Case 2.} Suppose that $u\notin \cup_{\ell=1}^tY_{\ell}$ and $v\notin \cup_{\ell=1}^tY_{\ell}$, and that $Z_u=Z_v=Z$.
Let $W$ be the lowest common ancestor of $Y_u$ and $Y_v$.
In particular, $Z$ is an ancestor of $W$ and $W\notin \{Y_1,\ldots,Y_t\}$.
For every $H\subseteq \calV_{\Theta}^W\setminus (\calV_{\Theta}^Z\cup Z)$ and $K\subseteq W\setminus \mu(W)$ consider the quantity
\[\beta(H,K)=\frac{x(\calV_{\Theta}^{W}\cup W,(T\cap\calV_{\Theta}^W)\cup H\cup K)}{x(\calV_{\Theta}^{W},(T\cap\calV_{\Theta}^W)\cup H)}.\]
Furthermore, for every $H\subseteq \calV_{\Theta}^W\setminus (\calV_{\Theta}^Z\cup Z)$, $K\subseteq W\setminus \mu(W)$ and $a\in \{u,v\}$ let 
\[\gamma_a(H,K)=\sum_{(M,N)\in \Delta_{\Theta}((T\cap\calV_{\Theta}^W)\cup H\cup K,u)}\frac{x(\calV_{\Theta}^{Y_a}\cup Y_a,M\cup N)}{x(\calV_{\Theta}^{Y_a},M)}\]
Then, we have that $\PP_{x,\Theta}(|e\cap \calB|=1|\calB\cap (\cup_{\ell=1}^tY_{\ell})=T)$ is equal to 
\[\sum_{H\subseteq \calV_{\Theta}^W\setminus (\calV_{\Theta}^Z\cup Z)}\sum_{K\subseteq W\setminus \mu(W)}\beta(H,K)\Big(\gamma_u(H,K)(1-\gamma_v(H,K))+\gamma_v(H,K)(1-\gamma_u(H,K))\Big).\]
As before, the above summation can be computed in time $2^{O(k(2^{\ell^{\star}}+\log(n)/\ell^{\star}))}$.\\

\noindent{\bf Case 3.} Suppose that $u\in \cup_{\ell=1}^tY_{\ell}$ and $v\notin \cup_{\ell=1}^tY_{\ell}$ (the other case is symmetric).
In this case, we have that $\PP_{x,\Theta}(|e\cap \calB|=1|\calB\cap (\cup_{\ell=1}^tY_{\ell})=T)$ is equal to $1-g_v$, and therefore we can compute it in time \[2^{O(k(2^{\ell^{\star}}+\log(n)/\ell^{\star}))}.\]

As we observe at the end of Section \ref{sec:sparsest-relaxation}, the optimization problem \eqref{eq:objective}-\eqref{eq:prob-consistent} can be solved in time
\[2^{O(k(2^{\ell^{\star}}+\log(n)/\ell^{\star}))}|\calI|^{O(1)}.\]
On the other hand, for every $n\ge 16$, by Lemma \ref{lem:lambert} we have 
\[k2^{\ell^{\star}}+\frac{k\log(n)}{\ell^{\star}}=k2^{\lceil \alpha^{\star}_n\rceil}+\frac{k\log(n)}{\lceil \alpha^{\star}_n\rceil}\le \frac{12k\log(n)}{\log\log(n)},\] 
and therefore, the randomized algorithm and the derandomization can be all performed in time 
\[2^{O\Big(k\frac{\log(n)}{\log\log(n)}\Big)} |\calI|^{O(1)}=2^{2^{O(k)}}|\calI|^{O(1)}.\]
To finish the proof, we verify the above equality by considering two cases. 
If $k<\log\log(n)$, we have $k\log(n)/\log\log(n)<\log(n)$ and the equality holds.
Otherwise, if $k\ge \log\log(n)$ and $n\ge 4$ we have 
$k\log(n)/\log\log(n)\le k\log(n)=2^{\log(k)+\log\log(n)}\le 2^{\log(k)+k}=2^{O(k)}$.
\end{proof}

\bibliographystyle{abbrv}
{\small \bibliography{refs}}

\end{document}